\documentclass[submission,copyright,creativecommons]{eptcs}
\providecommand{\event}{LSFA 2026} 

\usepackage{iftex}
\usepackage{graphicx}
\usepackage{hyperref}
\usepackage{mathpartir}
\usepackage{amssymb}
\usepackage{amsmath}
\usepackage{xcolor}
\usepackage{stmaryrd}
\usepackage{xspace}
\usepackage{color}

\usepackage{amsthm}
\newtheorem{definition}{Definition}
\newtheorem{theorem}{Theorem}

\newtheorem{example}{Example}
\ifpdf
  \usepackage{underscore}         
  \usepackage[T1]{fontenc}        
\else
  \usepackage{breakurl}           
\fi

\title{Multiparty Session Types for GDPR Purpose Compliance}
\author{Evangelia Vanezi\qquad Dimitrios Kouzapas \qquad Anna Philippou
\institute{Department of Computer Science, University of Cyprus, Nicosia, Cyprus}
\email{\{vanezi.evangelia,kouzapas.dimitrios,philippou.anna\}@ucy.ac.cy}
}
\def\titlerunning{MPST for GDPR Purpose Compliance}
\def\authorrunning{E. Vanezi, D. Kouzapas \& A. Philippou}

\usepackage{ifthen}

\DeclareMathAlphabet{\pazocal}{OMS}{zplm}{m}{n}

\newcommand{\LComm}{\lts{Comm}}
\newcommand{\LBranch}{\lts{BranchSel}}

\newcommand{\SUBINT}{\lts{SubInt}}
\newcommand{\SUBBOOL}{\lts{SubBool}}
\newcommand{\SUBDATA}{\lts{SubPD}}
\newcommand{\SUBEND}{\lts{SubEnd}}

\newcommand{\SUBINP}{\lts{SubInp}}
\newcommand{\SUBOUT}{\lts{SubOut}}
\newcommand{\SUBPDOUT}{\lts{SubPDOut}}
\newcommand{\SUBPDINP}{\lts{SubPDInp}}
\newcommand{\SUBSEL}{\lts{SubSel}}
\newcommand{\SUBBRA}{\lts{SubBranch}}

\newcommand{\mymath}[1]{\ensuremath{#1}\xspace}

\newcommand{\set}[1]{\mymath{\{#1\}}}
\newcommand{\setbar}[1]{\mymath{\ |\ }}
\newcommand{\bnfis}{\mymath{\ \ ::=\ \ }}
\newcommand{\bnfbar}{\mymath{\ \ |\ \ }}

\newcommand{\tree}[3][]{
	\ensuremath{
        \displaystyle
    	\ifthenelse{\equal{#1}{}}{}{#1}
            \frac{
        	#2
            }{
        	#3
            }
    }
}

\newcommand{\dfracdouble}[2]{%
  \genfrac{}{}{}{}{#1}{#2}\mkern-12mu
  \raisebox{0.6ex}{\rule{1.2em}{0.1pt}}
  \raisebox{-0.6ex}{\rule{1.2em}{0.1pt}}
}

\newcommand{\treeDouble}[3][]{
	\ensuremath{
        \displaystyle
    	\ifthenelse{\equal{#1}{}}{}{#1}
            \dfracdouble{
        	#2
            }{
        	#3
            }
    }
}

\newcommand{\dtree}[3][] {
    {
        \mprset{fraction={===}}
        \inferrule*[left=#1] {
            #2
        }{
            #3
        }
    }
}

\newcommand{\sep}{.}
\newcommand{\outsymb}{!}
\newcommand{\inpsymb}{?}
\newcommand{\selsymb}{\triangleleft}
\newcommand{\brasymb}{\triangleright}

\newcommand{\const}[1][]{\mymath{\mathsf{c}_{#1}}}
\newcommand{\constd}[1][]{\mymath{\mathsf{c'}_{#1}}}

\newcommand{\PP}[1][]{\mymath{P_{#1}}}

\newcommand{\outp}[2]{\mymath{#1\outsymb\langle#2\rangle\sep}}
\newcommand{\inpp}[2]{\mymath{#1\inpsymb(#2)\sep}}
\newcommand{\selp}[2]{\mymath{#1\selsymb#2\sep}}

\newcommand{\brapi}[3]{\mymath{#1\brasymb\{#2\}_{#3}}}
\newcommand{\bratable}[2]{\mymath{#1\brasymb \left\{\begin{array}{ll}#2\end{array}\right\}}}

\newcommand{\MM}[1][]{\mymath{M_{#1}}}
\newcommand{\MMd}[1][]{\mymath{M'_{#1}}}

\newcommand{\proc}[2]{\mymath{#1 \blacktriangleright  \left[#2\right]}}
\newcommand{\store}[3]{\mymath{#1\blacktriangleright [\pdata{#2}{#3}]}}

\newcommand{\Ga}[1][]{\mymath{\mathsf{\Gamma}_{#1}}}

\newcommand{\proves}{\mymath{\mathbin{\vdash}}}
\newcommand{\as}{\mymath{\mathbin{:}}}

\newcommand{\proj}[2]{\mymath{#1\rceil#2}}

\newcommand{\mrg}{\ensuremath{\sqcap}}

\newcommand{\subt}{\ensuremath{\leqslant}}

\newcommand{\inputvar}[1]{\ensuremath{(#1)}}

\newcommand{\outputvar}[1]{\ensuremath{\langle#1\rangle}}

\newcommand{\inpprefix}[2]{\ensuremath{#1?\inputvar{#2}}}

\newcommand{\outprefix}[2]{\ensuremath{#1!\outputvar{#2}}}

\newcommand{\psep}{\ensuremath{.\,}}
\newcommand{\If}{\ensuremath{\mathtt{if}}\xspace}
\newcommand{\Then}{\ensuremath{\mathtt{then}}\xspace}
\newcommand{\Else}{\ensuremath{\mathtt{else}}\xspace}

\newcommand{\match}[2]{\ensuremath{#1 = #2}}

\newcommand{\ifelse}[4]{\ensuremath{\If\ \match{#1}{#2}\ \Then\ #3\ \Else\ #4}}

\newcommand{\inact}{\mymath{\mathbf{0}}}
\newcommand{\Par}{\mymath{\mathrel{|}}}

\newcommand{\ceq}[2]{%
  \par\vspace{#1pt}
  \centerline{
  $#2$}
}

\newcommand{\privatet}[1]{\ensuremath{\mathsf[#1]}\xspace}
\newcommand{\hit}{\ensuremath{\_}}
\newcommand{\pdtype}[2]{\ensuremath{#1\otimes#2}}
\newcommand{\antype}[3]{\ensuremath{#1{[\pdtype{#2}{#3}]}}}

\newcommand{\Vars}{\mymath{\mathcal{V}}}
\newcommand{\Constants}{\mymath{\mathcal{C}}}

\newcommand{\Ids}{\mymath{\mathsf{Ids}}}
\newcommand{\id}[1][]{\mymath{\mathsf{id_{#1}}}}

\newcommand{\hid}{\mymath{\_}}

\newcommand{\pdata}[2]{\mymath{#1\otimes#2}}

\newcommand{\ii}{\ensuremath{\iota}\xspace}

\newcommand{\scong}{\mymath{\equiv}}

\newcommand{\subst}[2]{\mymath{\{^{#1}/_{#2}\}}}

\newcommand{\trans}[1]{\mymath{\stackrel{#1}{\longrightarrow}}}
\newcommand{\mtrans}[1]{\mymath{\stackrel{#1}{\longrightarrow^*}}}

\newcommand{\actwithdrawid}[2]{\mymath{#1\triangleleft\withdrawsymb@#2}}
\newcommand{\actstorewithdraw}[2]{\mymath{\overline{#1}\triangleright\withdrawsymb@#2}}

\newcommand{\g}[1][]{\mymath{\mathsf{g}_{#1}}}
\newcommand{\ti}{\mymath{\mathsf{i}}}
\newcommand{\tid}{\mymath{\mathsf{t_i}}}

\newcommand{\descr}[1]{(\textit{\footnotesize #1})}

\newcommand{\es}{\ensuremath{\emptyset}}

\newcommand{\cat}{,}

\newcommand{\trulefont}[1]{\ensuremath{[\text{\footnotesize$\mathsf{T#1}$}]}\xspace}

\newcommand{\TPdata}{\trulefont{Pd}}
\newcommand{\TVdata}{\trulefont{Vd}}
\newcommand{\THdata}{\trulefont{Hd}}
\newcommand{\TVar}{\trulefont{Var}}
\newcommand{\TVarRec}{\trulefont{VRec}}
\newcommand{\TPlace}{\trulefont{Plc}}

\newcommand{\TData}{\trulefont{Data}}

\newcommand{\TInact}{\trulefont{Inact}}
\newcommand{\TOut}{\trulefont{Out}}
\newcommand{\TOutPD}{\trulefont{OutPD}}

\newcommand{\TInp}{\trulefont{Inp}}
\newcommand{\TSel}{\trulefont{Sel}}
\newcommand{\TBranch}{\trulefont{Branch}}
\newcommand{\TInpPD}{\trulefont{InpPD}}

\newcommand{\TRec}{\trulefont{Rec}}

\newcommand{\TIf}{\trulefont{If}}

\newcommand{\TNet}{\trulefont{Net}}

\newcommand{\TSub}{\trulefont{Sub}}

\newcommand{\lts}[1]{\ensuremath{[\text{\footnotesize$\mathsf{#1}$}]}\xspace}

\newcommand{\outpd}[2]{\outprefix{#1}{#2} \psep}
\newcommand{\inppd}[2]{\inpprefix{#1}{#2} \psep}

\newcommand{\LSOut}{\lts{SOut}}

\newcommand{\LTrA}{\lts{True}}
\newcommand{\LFlA}{\lts{False}}

\newcommand{\Alpha}{\lts{Struct}}

\newcommand{\LSInp}{\lts{SInp}}

\newcommand{\Labels}{\mymath{\mathcal{L}}}
\colorlet{mypurple}{red!40!blue}
\definecolor{myorange}{HTML}{D55E00}

\newcommand{\lbl}[1]{\mymath{\textcolor{myorange}{\mathsf{#1}}}}
\newcommand{\lab}[1][]{\lbl{\ell_{#1}}}

\colorlet{myred}{red!70!black}
\newcommand{\type}[1]{\ensuremath{\textcolor{myred}{\mathtt{#1}}}\xspace}

\newcommand{\integer}{\type{int}}
\newcommand{\bool}{\type{bool}}

\colorlet{myblue}{blue!60!black}

\newcommand{\tstore}[1]{\ensuremath{\textcolor{myblue}{\mathsf{#1}}}\xspace}

\newcommand{\annota}[1][]{\tstore{\alpha_{#1}}}
\newcommand{\annotb}[1][]{\tstore{\beta_{#1}}}
\newcommand{\annotg}[1][]{\tstore{\gamma_{#1}}}

\colorlet{myamber}{orange!80!yellow}

\colorlet{mygreen}{green!45!black}

\newcommand{\pcolour}[1]{\textcolor{mygreen}{#1}}
\newcommand{\pt}[1]{\ensuremath{\pcolour{\mathsf{#1}}}\xspace}
\newcommand{\p}[1][]{\pt{p_{#1}}}
\newcommand{\q}[1][]{\pt{q_{#1}}}
\newcommand{\rr}[1][]{\pt{r_{#1}}}
\newcommand{\pd}[1][]{\pt{p'_{#1}}}

\newcommand{\patient}[1][]
{\pt{patient_{#1}}}

\newcommand{\generalpract}[1][]
{\pt{nurse{#1}}}

\newcommand{\laboratory}[1][]{\pt{lab_{#1}}}
\newcommand{\diagnosiseng}[1][]
{\pt{practitioner_{#1}}}
\newcommand{\infostore}[1][]{\tstore{info_{#1}}}
\newcommand{\symptomsstore}[1][]
{\tstore{symptoms_{#1}}}
\newcommand{\lorderstore}[1][]{\tstore{lorder_{#1}}}
\newcommand{\resultsstore}[1][]
{\tstore{results_{#1}}}
\newcommand{\diagnstore}[1][]
{\tstore{diagnosis_{#1}}}

\newcommand{\basicinfo}[1][]{\type{basicInfo{#1}}}
\newcommand{\symptoms}[1][]{\type{sympt{#1}}}
\newcommand{\laborder}[1][]{\type{labOrder{#1}}}
\newcommand{\labresults}[1][]{\type{labRes{#1}}}
\newcommand{\diagnosticreport}[1][]{\type{diagnReport{#1}}}

\newcommand{\labyes}{\lbl{lab}}

\newcommand{\labno}{\lbl{noLab}}

\newcommand{\U}[1][]{\mymath{\mathsf{U_{#1}}}}
\newcommand{\Ud}[1][]{\mymath{\mathsf{U'_{#1}}}}

\newcommand{\G}[1][]{\ensuremath{\mathsf{G_{#1}}}\xspace}

\newcommand{\pass}[3]{#1 \to #2{:} \langle#3\rangle.\,}

\newcommand{\choice}[4]{ \mymath{#1 \to #2: \left\{{#3}\right\}_{#4}} }

\newcommand{\choicetable}[3]{
    \ensuremath{
    #1 \to #2:
    \left\{ 
        \begin{array}{ll}
            #3
        \end{array}
    \right\} 
    }
}

\newcommand{\ginact}{\mathsf{end}}

\newcommand{\pts}[1]{\ensuremath{\mathsf{pts}(#1)}}

\newcommand{\gcons}[4]{
    #1 \setminus #2 \stackrel{#4}{\rightarrow} #3
}

\newcommand{\gconsdef}[3]{
    #1 \stackrel{#3}{\rightarrow} #2
}

\newcommand{\gconsShort}[2]{
    #1 \setminus #2
}

\newcommand{\act}{\ensuremath{\mathsf{act}}}

\newcommand{\local}[1][]{\mymath{T_{#1}}}
\newcommand{\locald}[1][]{\mymath{T'_{#1}}}

\newcommand{\comment}[1]{}
\newcommand{\tout}[2]{#1\outsymb#2\sep}
\newcommand{\tinp}[2]{#1\inpsymb#2\sep}
\newcommand{\tselect}[2]{#1\oplus\{#2\}}
\newcommand{\tselecti}[3]{#1\oplus\{#2\}_{#3}}
\newcommand{\tbranch}[2]{#1\&\{#2\}}
\newcommand{\tbranchi}[3]{#1\&\{#2\}_{#3}}
\newcommand{\tinact}{\mathsf{end}}

\newcommand{\tchoicetable}[2]{
    \ensuremath{
    #1 \;\&
    \left \{
        \begin{array}{ll}
            #2
        \end{array}
    \right\}
    }
}

\begin{document}
\maketitle

\begin{abstract}
The General Data Protection Regulation (GDPR) establishes purpose
limitation as a fundamental constraint on personal data processing: 
personal data must be collected, stored, and processed strictly in
accordance with explicitly specified purposes. 
Therefore,
systems are required not only to declare the purposes under which
personal data are processed, but also to ensure that their
runtime behaviour 
remains aligned with the declared purposes.
Yet, in mainstream software engineering practice, purposes are often
treated as informal declarations, largely disconnected from system
behaviour and, therefore, not amenable to rigorous reasoning
about purpose compliance. This gap becomes particularly problematic
in distributed systems, where personal data may flow across multiple
entities and evolve through complex communication patterns.
To address this challenge,  recent works propose
a more elaborate treatment of purposes based on structured,
action-oriented representations of the data-processing interactions
involved in their fulfilment. 
Building on these insights, we introduce a formal, purpose-aware
framework grounded in multiparty session types
in which purposes are modelled as structured interaction protocols
among system entities. 
Within our framework, system implementations are specified using a process calculus that captures the semantics of distributed interactions and features private data as a first-class entity.
Furthermore, we define a type system that verifies compliance between declared purposes and system models, and we establish subject reduction and purpose fidelity results, thereby ensuring that well-typed systems do not deviate from their specified purposes during execution. 
We demonstrate our approach through a case study involving a healthcare system. 
Ultimately, our objective is to evolve this formal framework into a software-engineering-oriented approach that unifies purpose modelling and compliance verification within a lifecycle-driven methodology, thus
enabling 
a practically applicable privacy-by-design process.
\end{abstract}

\section{Introduction}
\label{introduction}
The increasing use of software systems that store and process personal data 
raises major concerns about data privacy. In response, governments and legal institutions have enacted regulations such as the European Union General Data Protection Regulation (GDPR)~\cite{eugdpr}. A key concept in the GDPR and other privacy directives is \textit{purpose limitation}, which requires that personal data be collected for specified and explicit purposes, and not further processed in a manner incompatible with those purposes.
Therefore, systems are required to specify the processing purposes they may pursue when handling personal data. %
For example, in healthcare, a doctor may process a patient's medical data for the purpose of performing a {\em diagnosis}, while the accounting department may use the patient's address for the purpose of {\em issuing an invoice}. In contrast, repurposing the same medical records for an unrelated objective, such as {\em marketing}, would constitute a purpose incompatible with the original intent of collection and would thus be disallowed under the principle of purpose limitation.

While purposes can be specified at the policy level, ensuring that the actual behaviour of a system remains aligned with the declared purposes is considerably more complex. 
Yet, in mainstream software engineering practice, 
existing approaches 
often treat purposes as informal declarations, e.g., as simple textual labels~\cite{tokas2020formal,alshareef2022precise}, disconnected from system behaviour and lacking formal guarantees. To address this challenge, the need of providing formal 
semantics to the notion of purpose has been strongly
advocated and, in the formal landscape, several works have adopted
such semantics based on structured,
action-oriented representations of the data-processing interactions
involved in a purpose fulfilment~\cite{RiahiKG17,tschantz2011semantics,TschantzDW12,jafari2014framework}, 
Building on this perspective and informed by GDPR analysis~\cite{vanezi2019gdpr}, in our recent work
we have defined purposes as \textit{structured interaction protocols over data}~\cite{EVpurpose,vanezi2025privacy,vanezi2020dialogop}.
Under this approach purposes are captured as scenarios of 
communication actions and interactions between system entities, 
that specify the patterns of  exchange, storage, reading, and writing of personal data. 
Practical applicability in software engineering is demonstrated through purpose-aware UML sequence diagrams~\cite{EVpurpose,vanezi2025privacy}, which constitute our proposed modelling notation for representing and reasoning about purposes at the design level (see Fig. 1 and relevant discussion below).

%


The current work provides a formal foundation for this definition by modelling purposes as global types within a purpose-aware multiparty session type framework. System implementations are specified using a process calculus that captures the semantics of distributed interactions, and features private data as a first-class entity.
Unlike standard uses of multiparty session types, where global types specify only communication correctness, here they additionally encode purpose constraints: they restrict not only the flow of messages, but also the flow, access, and usage of personal data, including interactions with data stores.
We then define a type system that verifies compliance between declared purposes and system models and establish subject reduction and purpose fidelity results, thereby ensuring that well-typed systems do not deviate from their specified purposes during execution. 

%
\vspace{-0.4cm}
\paragraph{\textbf{An example of purpose.}}
A representative example of purpose is a diagnostic process taking place in a medical system, where personal data are exchanged between multiple parties. In this purpose, a patient provides their data (e.g., basic information and symptoms), and requests a diagnosis. These data should only be processed and accessed according to the purposes specified in the system's privacy policy, to initiate, execute, and fulfil the requested process. 
Such a process might involve additional participants, such as a {\em nurse} or a {\em general practitioner}, and operations such as {\em reading symptoms} or {\em ordering a laboratory test}, while different interactions might occur depending on intermediate decisions. In any case, every action and interaction must follow the specified purpose protocol. 
Furthermore, while some entities may need to process the data to perform their task, others may only relay or store information without accessing its contents.

\begin{figure} [h!]
    \centering
    \includegraphics[width=0.8\linewidth]{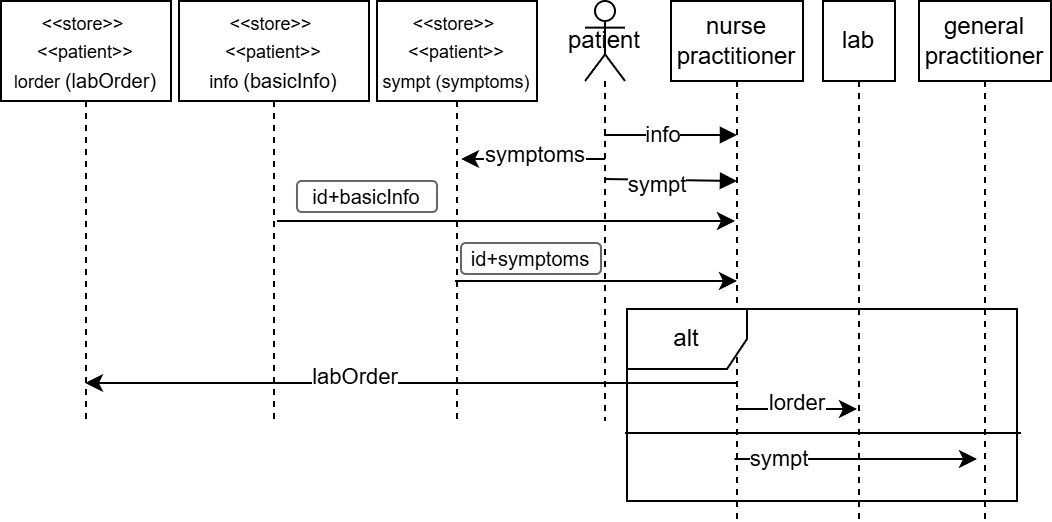}
    \caption{Part of the diagnostic purpose illustrated via a purpose-aware sequence diagram~\cite{vanezi2025privacy}}
    \label{fig:diagnosispurpose}
\end{figure}
Figure~\ref{fig:diagnosispurpose} presents part of such a diagnostic purpose via a purpose-aware sequence diagram~\cite{EVpurpose,vanezi2025privacy}. 
Participants, represented by lifelines, include system entities and additionally personal data repositories called stores (specified through the stereotype <<\textit{store}>>). Such repositories belong to a specific actor (specified by the stereotype <<\textit{owner}>>), and hold data of a specific type defined in brackets, e.g., (basicInfo). Store names (e.g., info), called references, are used to interact with them. Exchanges with such repositories signify accessing personal data. 
Such a diagram illustrates how personal data are intended to be exchanged between the participants, and how access to data is dictated by the defined purpose, including branching interactions which determine how additional entities are involved.


The purpose illustrated above specifies that a patient first communicates their basic information via the corresponding reference (info) to a nurse practitioner, followed by writing their symptoms in the corresponding personal data repository (sympt), and forwards them to the nurse. In turn, the nurse practitioner reads both pieces of personal data from the repositories, and then decides whether a laboratory exam is needed, in which case the nurse prescribes the order in the corresponding repository (labOrder) and forwards its reference to the lab; or, if a laboratory exam is not needed, the nurse proceeds with sending the symptoms repository reference to the general practitioner to perform a diagnosis. 
%
Our aim in this paper is to formalize the approach through multiparty session types and evolve it into a framework that
enables providing formal guarantees of purpose compliance. The complete case study is detailed in Section~\ref{casestudy}.

%
%
%
%
\paragraph{\textbf{Related Work.}}
\label{background} 
Existing approaches for reasoning about purposes in software engineering such as~\cite{cambronero2024towards,pedroza2021model,ye2023mbipv} describe purposes using informal or semi-formal specification with limited semantic grounding, while others represent and handle purposes through textual labels~\cite{tokas2020formal,alshareef2022precise}. Such approaches do not provide
a basis for precise analysis of whether system behaviour conforms to purpose-based specification. 

To address this need, several formal approaches have also been proposed, approaching purposes as workflows. 
The work in~\cite{RiahiKG17} models purposes as Petri net workflows and verifies compliance against actor models using model checking. 
Markov Decision Processes are employed in ~\cite{tschantz2011semantics,TschantzDW12}, while a planning-based formalism is proposed to audit systems against privacy policies. A similar  
approach is adopted in \cite{basin2018purpose}, where purposes are identified with business processes, and formal models of inter-process communication are used to derive or audit privacy policies.
In~\cite{jafari2011towards}, the authors introduce a semantic model for purpose-based privacy policies, together with a modal logic and the corresponding model-checking procedure for compliance verification. 
Other works like~\cite{MasellisGR15}
define temporal logic-based semantics and run-time monitoring methodologies for the enforcement of purpose-based privacy policies.

Complementing and extending these approaches, our work models purposes as structured communication protocols between interacting entities. It captures personal data flows and access patterns, and enables static verification of purpose compliance at design time through a type system grounded in multiparty session types. 
Additionally, the practical application of our approach in
software engineering practice is direct, through purpose-aware UML sequence diagrams~\cite{EVpurpose,vanezi2025privacy}, enabling future integration into a lifecycle-driven methodology.

\paragraph{Contribution.}
This work presents the following contributions:
\begin{enumerate}
\item \textbf{Formal modelling of systems} using a process calculus that explicitly captures personal data processing (Sect.~\ref{calculus});
\item \textbf{Formal specification of processing purposes} through a purpose-aware type language (Sect. ~\ref{purpose});
\item \textbf{Formal verification of system compliance with declared purposes} via a type system (Sect.~\ref{typesystem}), supported by proofs for subject reduction and purpose fidelity (Sect.~\ref{theorems});
\item \textbf{A realistic case study}, demonstrating compliant and non-compliant behaviours captured by our framework (Sect.~\ref{casestudy}).
\end{enumerate}
Section~\ref{concl} discusses conclusions and future work. Detailed proofs, auxiliary lemmas, and typing of the case study, are presented in
~\cite{vanezigdprmst}. 

\section{The Calculus}
\label{calculus}
In this section we present a process calculus for modelling systems that process personal data. The calculus integrates (i) the notions of personal data and personal data stores~\cite{KP-LMCS17}, to capture personal data processing operations, and
(ii) multiparty session types (MPST)~\cite{yoshida2019very}, to structure interactions as communication protocols among participants.
%
MPSTs provide a typed foundation for specifying structured communication among distributed entities, ensuring properties such as communication safety and progress. 
%
%
The goal of the calculus is to provide a formal model in which both communication 
and personal data manipulation can be explicitly represented and analysed. In 
particular, it enables us to model systems as workflows of message-passing actions and data-processing operations, forming the basis for reasoning about purpose compliance.
In contrast to standard MPST-based calculi, our model explicitly incorporates personal data stores and controlled data access via references.

Based on~\cite{KP-LMCS17} and following the GDPR, we assume that personal data are
data associated to individuals. As such, we model personal data as structures that associate 
constants, that is, pieces of information, with identifying pieces of information, 
which we simply refer to as \emph{identities}. Such personal data used in an information system are typically  
stored within a database of the system. We capture
such databases as a collection of {\em stores}, where a store is
encoded as a high-level process term associated with operations for the manipulation
of personal data. This manipulation takes place with the use of a special kind
of names called {\em references}. 
In the sequel, we describe these concepts within a multi-party, synchronous, session types framework. For simplicity, we eliminate shared channels for session initiations. 
However, the framework can be easily extended to accommodate multiple sessions.

\subsection{Syntax}
Figure~\ref{fig:syntax} defines the syntax of the proposed calculus. The calculus assumes the following basic structures:
i) the set of {\em session participants} $\p, \q, \ldots$;
ii) the set of variables $\Vars$, ranged over by $x, y, z, \ldots$;
iii) the set of store references $\mathcal{R}$, ranged over by $\rr, \rr',\ldots$;
%
%
iv) the set of constants $\Constants$, ranged over by $\const, \ldots$;
v) the set of identities $\Ids$, ranged over by $\id, \id[1], \id[2], \ldots$, with the {\em anonymous} identity \hid\ supporting personal data anonymisation; and
vi) the set of labels $\Labels$, ranged over by $\lab, \lab[1], \lab[2], \dots$.

\begin{figure}[h]
    \[
    \begin{array}{rcll}
    \hline \\
        v       &\bnfis& \const \bnfbar \rr & \descr{values}
        \\[1mm]
        t		&\bnfis& v \bnfbar x & \descr{terms}
        \\[1mm]
        u       &\bnfis& \rr \bnfbar x & \descr{store terms}
        \\[1mm]
        \ii   &\bnfis& \id \bnfbar \hid & \descr{identity}
        \\[1mm] 
        k		&\bnfis& \pdata{x}{y} \bnfbar \pdata{\hid}{x} & \descr{placeholders}
        \\[1mm]
        d		&\bnfis&  \pdata{\ii}{\const}\bnfbar k & \descr{data terms}
        \\[4mm]
        \PP   &\bnfis&    \outp{\p}{t} \PP \bnfbar \inpp{\p}{x} \PP \bnfbar\selp{\p}{\lab} \PP \bnfbar \brapi{\p}{\lab[i]: \PP[i]}{i \in I}  
        &\descr{processes}
        \\[1mm]
        &\bnfbar&  \outp{u}{d} \PP \bnfbar \inpp{u}{k} \PP  \bnfbar	\ifelse{a}{b}{P}{P}\;\;\;\;\;
        \\[1mm]
        &\bnfbar& X \bnfbar  
        \mu X. P \bnfbar \inact
        \\[4mm] 
        \MM   &\bnfis&    \proc{\p}{\PP} \bnfbar \store{\rr}{\id}{\const} \bnfbar \MM \Par \MM
        &  \descr{networks}
        \\ \\
        \hline
    \end{array}
    \]
	\caption{Syntax of the calculus}
	\label{fig:syntax}
\end{figure}

Values include constants $\const$ and references $\rr$. Terms $t$ include values and variables $x$, whereas store terms $u$ include references and variables.
Metavariable \ii\ ranges over identities, \id, and the anonymised identity \hid. The symbol $k$ is a variable placeholder for personal data, ranged over by \pdata{x}{y} and \pdata{\hid}{x}. The symbol $v$ ranges over personal data \pdata{\ii}{\const}, associating data $\const$ with an identity \ii, signifying the owner of the data, and personal data variable $k$. 
Personal data of the form \pdata{\hid}{\const}, denote anonymised personal data.

Processes implement the behaviour of participants. 
Process $\outp{\p}{t} \PP$ sends term $t$ to participant $\p$ and continues as \PP, whereas process $\inpp{\p}{x} \PP$ denotes receiving a value from participant $\p$, substituting it into variable $x$, and continuing as \PP.
The \emph{select process} $\selp{\p}{\lab} \PP$ selects a label \lab on participant \p and  continues as \PP. Dually, the branch process \brapi{\p}{\lab[i]: \PP[i]}{i \in I} waits for the selection of one of the labels \lab[i] from participant \p and then proceeds with the corresponding process $\PP[i]$, $i \in I$.
Personal data processing uses the special store references ($\rr$), which provide access to processes that store personal data. Processes do not exchange personal data directly; instead, they exchange references. Any process with access to a store reference can use it to write data to or read data from the corresponding store.
Process $\outp{u}{d} \PP$ represents storing personal data $d$ on reference $u$ and  continuing as \PP. Dually, process $\inpp{u}{k} \PP$ denotes reading personal data via reference $u$ and substituting the retrieved data within \PP before continuing. 

The matching construct \ifelse{a}{b}{P}{Q} evolves as $P$ if the condition $\match{a}{b}$ holds, and as $Q$ otherwise, where $a$ and $b$ range over values (e.g., constants). 
A conditional is used to decide internal process control. When combined with subtyping, a conditional is the mechanism for performing internal choice, i.e., allowing multiple select labels. 
Process $X$ ranges over recursion variables, and process $\mu X.P$ is a recursive process. Finally, \inact represents an inactive process.
A {\em role} network, \proc{\p}{\PP}, associates a participant \p with a process \PP.
The notation \store{\rr}{\id}{{\const}} represents a \textit{personal data store}, where personal data are accessed through the references $\rr$.
%
The syntax supports the parallel composition of networks, $\MM[1] \Par \MM[2]$. 
\begin{example}[Simplified Medical Workflow Network]\label{ex1}
    \rm
    We illustrate the use of the calculus through a simplified fragment of the 
 diagnostic workflow introduced in Section~\ref{introduction} and developed fully in Section~\ref{casestudy}.
    In network
    $M_1$ below, a \patient sends a reference ($\rr[s]$) to a data store containing their symptoms (\pdata{\id}{\symptoms_1}) to a \generalpract, who then accesses the personal data from the store and proceeds with further actions. 
    %
\ceq{4}{
        \MM[1]  =             \proc{ \patient }{ \outp{\generalpract}{\rr[s]} P }
                \Par    \proc{ \generalpract }{ \inpp{\patient}{z} \inpp{z}{y \otimes x_{sympt}} 
                Q}
                \Par    \store{\rr[s]}{\id}{\symptoms}
} 
    A variation of the above example ensures that the personal data read by the  \generalpract is anonymised:
\ceq{4}{
        \MM[2]  =             \proc{ \patient }{ \outp{\generalpract}{\rr[s]} P }
                \Par    \proc{ \generalpract }{ \inpp{\patient}{z} \inpp{z}{\hid \otimes x_{sympt}}
                Q }
                \Par    \store{\rr[s]}{\id}{\symptoms}
         \label{examplehid}
    }
    The next network demonstrates a choice implemented via select and branch processes:
\ceq{4} {
         \MM[3]  =    \proc{\generalpract}{\selp{\diagnosiseng}{\labyes} \inact} \Par \proc{\diagnosiseng}{\brapi{\generalpract}{\labyes: \PP[1], \labno: \PP[2]}{}}   
         \label{examplebrsel}
    }
    Network \MM[3] demonstrates the case in which the \diagnosiseng is expecting from the \generalpract either the decision that laboratory exam is needed (label \labyes) and proceed as \PP[1], or the decision that no exam will take place (label \labno), in which case it will proceed as \PP[2]. At the same time the \generalpract makes the choice that a lab exam is needed, by sending label \labyes to the \diagnosiseng.
    \qed
\end{example}

\subsection{Operational Semantics}
The semantics of the process calculus define substitution in two cases:
(i) \(\PP\subst{v}{x}\), which substitutes all free occurrences of \(x\) in
\(\PP\) with \(v\) ; and
(ii) \(\PP\subst{\pdata{\ii}{\const}}{k}\), which substitutes all free
occurrences of \(k\) in \(\PP\) with \(\pdata{\ii}{\const}\).
The definitions can be found in
~\cite{vanezigdprmst}.
In particular, we note that substitution on private-data placeholders accounts for
anonymous substitution:
\[
\setlength{\arraycolsep}{2pt}
\begin{array}{rclcrcl}
    \pdata{x}{y}\subst{\pdata{\ii}{\const}}{\pdata{x}{y}}
    &=&
    \pdata{\ii}{\const}
    &&
    \pdata{\hid}{y}\subst{\pdata{\ii}{\const}}{\pdata{\hid}{y}}
    &=&
    \pdata{\hid}{\const}
    \\[1mm]
    \pdata{z}{w}\subst{\pdata{\ii}{\const}}{\pdata{x}{y}}
    &=&
    \pdata{z}{w}
    &\text{if}&
    z \neq x \text{ or } w \neq y .
\end{array}
\]

Figure~\ref{fig:lts} presents the operational semantics of the process calculus, defined in terms of a structural congruence relation $\equiv$.
Structural congruence is the least relation generated by the rules in the first two lines of Figure~\ref{fig:lts}. It establishes the associativity and commutativity of the parallel composition operator and treats a terminated role of the form \proc{\p}{\inact} as a neutral element.

\begin{figure} [h]
    \[
    \begin{array}{c}
    \hline \\
        \begin{array}{c}
            \MM \Par \proc{\p}{\inact} \equiv \MM
            \qquad
            \MM[1] \Par \MM[2] \equiv \MM[2] \Par \MM[1]        
            \\[3mm]
            (\MM[1] \Par \MM[2]) \Par \MM[3] \equiv \MM[1] \Par (\MM[2] \Par \MM[3])
            \qquad 
            \proc{\p}{\mu X.\PP} \Par \MM \equiv \proc{\p}{\PP \subst{\mu X.P}{X}} \Par \MM
        \end{array}
        \\[10mm]
        \begin{array}{rrcl}
            \LComm &
                        \proc{\p}{\outp{\q}{v} \PP[1]} \Par \proc{\q}{\inpp{\p}{x} \PP[2]} \Par \MM &\trans{}& \proc{\p}{\PP[1]} \Par \proc{\q}{\PP[2] \subst{v}{x}} \Par \MM
            \\[3mm]
            \LBranch &
                        \proc{\p}{\selp{\q}{\lab[j]}\PP} \Par \proc{\q}{\brapi{\p}{\lab[i]: \PP[i]}{i\in I}} \Par \MM &\trans{}& \proc{\p}{\PP} \Par \proc{\q}{\PP[j]} \Par \MM \;\; j\in I
            \\[3mm]
            \LSOut &
                        \proc{\p}{ \outp{\rr}{\pdata{\ii}{\const}} \PP} \Par \store{\rr}{\id}{\constd} \Par \MM &\trans{}& \proc{\p}{\PP} \Par \store{\rr}{\id}{\const} \Par \MM
            \\[3mm]
            \LSInp &
                        \proc{\p}{ \inpp{\rr}{k} \PP}  \Par \store{\rr}{\id}{\const} \Par \MM &\trans{}& \proc{\p}{\PP \subst{\pdata{\id}{\const}}{k}} \Par \store{\rr}{\id}{\const} \Par \MM
            \\[3mm]
        
            \LTrA &
               \proc{\p}{\ifelse{a}{a}{P}{Q}} \Par \MM
    		      &\trans{}&
                \proc{\p}{P} \Par \MM
            \\[3mm]
            \LFlA &
                \proc{\p}{\ifelse{a}{b}{P}{Q}}  \Par \MM
                &\trans{}&
                \proc{\p}{Q}  \Par \MM \qquad a \neq b
            \\[3mm]
            \Alpha
            & \multicolumn{3}{c}{
                \tree {
                    \MM[1] \equiv \MMd[1] \quad \MMd[1] \trans{} \MMd[2] \quad  \MMd[2]\equiv \MM[2]
                }{
                     \MM[1] \trans{} \MM[2]
                }
            }
        \end{array}
        \\ \\    
        \hline
    \end{array}
    \]
	\caption{Structural Congruence and Operational Semantics}
	\label{fig:lts}
\end{figure}

Rule \LComm describes message communication between two parallel roles. Role \p sends a value $v$ to role \q, which receives it in variable $x$. 
After the interaction, the former proceeds as $\proc{\p}{P_1}$, and the latter as $\proc{\q}{P_2 \subst{v}{x}}$, where all free occurrences of $x$ in $P_2$ are substituted by $x$.
Rule \LBranch describes synchronisation between two parallel roles to execute a choice interaction. Role \p sends a label $\lab[j]$ to role \q, while role \q receives the label from role \p. After the interaction, the former proceeds as $\proc{\p}{P}$, and the latter as $\proc{\q}{P_j}$, following label \lab[j].
Rule \LSOut describes a role storing a value \pdata{\id}{\const} in the store corresponding to reference $\rr$. After the storing action, the store updates its data value, while the role proceeds as $\proc{\p}{P}$.
Rule \LSInp describes a role reading a value \pdata{\id}{\const} into variable $k$ from the store corresponding to reference $\rr$. After the read action, the role proceeds as $\proc{\p}{\subst{\pdata{\id}{\const}}{k}}$, with $k$ substituted by \pdata{\id}{\const}.
%
Finally, rule \Alpha defines reduction under structural equivalence. 
We write \mtrans{} for the reflexive and associative closure of relation \trans{}.
\begin{example}[Simplified Medical Workflow Network Transition] 
\rm
We demonstrate the application of the transition rules \LComm and \LSInp, to networks $\MM[1]$ and $\MM[2]$ from Example~\ref{ex1}, as follows:
\begin{eqnarray*}
        \MM[1] & = &            \proc{ \patient }{ \outp{\generalpract}{\rr[s]} P }
        \Par    \proc{ \generalpract }{ \inpp{\patient}{z} \inpp{z}{y \otimes x_{sympt}} 
        Q } 
        \Par    \store{\rr[s]}{\id}{\symptoms}\\
        &\trans{}&
        \proc{ \patient }{ P }
        \Par    \proc{ \generalpract }{  \inpp{\rr[s]}{y \otimes x_{sympt}} 
        Q } 
        \Par    \store{\rr[s]}{\id}{\symptoms}\\
        &\trans{}&
        \proc{ \patient }{ P }
        \Par    \proc{ \generalpract}
        {  
        Q\subst{\pdata{\id}{\symptoms}}{y \otimes x_{sympt}} 
        }
        \Par    \store{\rr[s]}{\id}{\symptoms}
\\
\\
         \MM[2] & = &            \proc{ \patient }{ \outp{\generalpract}{\rr[s]} P }
        \Par    \proc{ \generalpract }{ \inpp{\patient}{z} \inpp{z}{\hid \otimes x_{sympt}} 
        Q } 
        \Par    \store{\rr[s]}{\id}{\symptoms}\\
        &\trans{}&
    \proc{ \patient }{ P }
        \Par    \proc{ \generalpract }{  \inpp{\rr[s]}{\hid \otimes x_{sympt}} 
        Q } 
        \Par    \store{\rr[s]}{\id}{\symptoms}\\
        & \trans{}&
        \proc{ \patient }{ P }
        \Par    \proc{ \generalpract}
        {  
        Q\subst{\pdata{\id}{\symptoms}}{\hid \otimes x_{sympt}}
        }
        \Par    \store{\rr[s]}{\id}{\symptoms}
         \label{example} 
    \end{eqnarray*} 
\end{example}
\vspace{-3pt}
Note that the substitution function ensures that in 
 the continuation process of the \generalpract
in \MM[2] the patient's data are read anonymized.
\section{Defining Purposes via a Type Language}
\label{purpose}
We now proceed to propose a formal type language to rigorously model purposes.
The language extends multiparty session types, enabling the definition of a purpose by formalising how personal data flows among system entities in order to achieve the purpose.
In particular, we represent a purpose by means of a {\em global type}, which specifies the interaction protocol that participants must follow to achieve the intended objective.
Subsequently, {\em local types} serve to characterise the role-specific realisation of the purpose. The relationship between global and local types is established through projection.

\subsection{Global and Local Types} 
\begin{figure}[h]
\[
    \begin{array}{rccll}
\hline
        \\
        & && \quad \mathsf{\annota, \annotb, \annotg, \dots} & \descr{annotations}
        \\[2mm]
        & \quad \g &\bnfis& \integer \bnfbar \bool \bnfbar \dots & \descr{Ground Types} 
        \\[2mm]
        & \quad \tid &\bnfis& \ti \bnfbar \hit   & \descr{Id Types} 
        \\[2mm]
        & \quad \mathsf{U}&\bnfis& \g  
        \bnfbar \antype{\annota}{\tid}{\g}  
        & \descr{Exchange Types}
        \\[2mm]
        & \quad  \G &\bnfis& \pass{\p}{\q}{\U} \G
         &\descr{Value Exchange}
        \\[1mm]
        &   &\bnfbar&  \pass{\annota}{\p}{\pdtype{\tid}{\g}} \G
        &\descr{Personal Data Read} \\[1mm]
        &   &\bnfbar&  \pass{\p}{ \annota }{\pdtype{\tid}{\g}}  \G
        &\descr{Personal Data Storage} \\[1mm]
        & &\bnfbar&  \choice{\p}{\q}{\lab[i] : \G_i }{i \in I} 
        \qquad \qquad &\descr{Select/Branch}
        \\[1mm]
        & &\bnfbar&  t  \bnfbar \mu t.\G
        \qquad \qquad &\descr{Recursion}
     \\[1mm]
        &   &\bnfbar&  \ginact & \descr{Inact}
        \\[3mm]
        & T &\bnfis& \tinp{\p}{\U} \local & \descr{Value Input}
        \\[1mm]
        && \bnfbar & \tout{\p}{\U} \local &\descr{Value Output}
        \\[1mm]
        && \bnfbar & 
        \tinp{\annota}{\pdtype{\tid}{\g}} 
        \local & \descr{Personal Data Value Input} 
        \\[1mm]
        && \bnfbar & 
        \tout{\annota}{\pdtype{\tid}{\g}}
        \local 
        & \descr{Personal Data Value Output}
        \\[1mm]
        && \bnfbar & \tselecti{\p}{\lab[i]: \local[i]}{i \in I}  & \descr{Selection}
        \\[1mm]
        &&\bnfbar& \tbranchi{\p}{\lab[i]: \local[i]}{i \in I}  & \descr{Branching}
        \\[1mm]
        & &\bnfbar&  t  \bnfbar \mu t.\local
        \qquad \qquad &\descr{Recursion}
        \\[1mm]
        && \bnfbar & \tinact \qquad & \descr{Inact} 
        \\
        \\
        \hline
    \end{array}
\]
\caption{\label{table:types} Global and Local Types}
\end{figure}
%
Figure~\ref{table:types} presents the syntax of the proposed global and local types. The type system assumes a set of ground types, ranged over by \g, including $\mathsf{int}$ and $\mathsf{bool}$. Type $\ti$ denotes the type of data-subject identities, whereas \hit\, 
denotes the anonymous identity type. The metasymbol \tid ranges over both $\ti$ and \hit.
An exchange type \U ranges over either ground types \g, or personal data store types of the form \antype{\annota}{\tid}{\g}.
Personal data store types \antype{\annota}{\tid}{\g} are
parameterised by an identity type \tid and a data type, \g, 
and are decorated with annotations, ranged over by \annota, \annotb, and \annotg. These annotations enable reasoning about the manipulation of private data through the corresponding database stores.

%


We assume that each data store is decorated with a static annotation $\annota$.
This annotation is part of the purpose definition and is not dynamically introduced by processes. 
%
%
We further assume that annotations uniquely determine the type of stored data.
That is, for a given annotation \annota, if a global type specifies two types
$\antype{\annota}{\tid}{\const[1]}$ and $\antype{\annota}{\tid}{\const[2]}$ then
$\const[1] = \const[2]$,


Global types \G define the interaction between the participants of a system.
Global type $\pass{\p}{\q}{\U} \G$ denotes the transmission of a value of type $\U$ from participant $\p$ to participant $\q$ and proceeding with global type $\G$.
Global type $\pass{\annota}{\p}{\pdtype{\tid}{\g}}\G$ denotes participant $\p$ reading from a store with annotation $\annota$ a value of type $\pdtype{\tid}{\g}$ before proceeding with \G.
Global type $\pass{\p}{\annota}{\pdtype{\tid}{\g}}$\G, denotes participant $\p$ storing a value of type $\pdtype{\tid}{\g}$ to personal data store annotated as \annota before proceeding with \G.
Global type $\choice{\p}{\q}{\lab[i] : \G_i}{i \in I}$ denotes participant $\p$ transmitting to participant $\q$ a label $\lab[i]$, and the protocol proceeding with the respective global type $\G_i$. Recursive protocols are modelled using $\mu t.\G$. Finally, global type $\ginact$ denotes the end of a global type.

\begin{example}[Simplified Medical Workflow Purpose]\label{ex3}
\rm
The following global types capture the purposes corresponding to the networks defined in Example~\ref{ex1}:
\vspace{-4pt}
\begin{eqnarray}\label{ex2}
    \G_1 &=& \pass{\patient}{\generalpract}{ {\antype{\annota}{\ti}{\symptoms}}}
            \pass{\annota}{\generalpract}{\ti \otimes \symptoms}
            \local
    \\
    \G_2 &=& \pass{\patient}{\generalpract}{{\antype{\annota}{\ti}{\symptoms}}}
            \pass{\annota}{\generalpract}{ \pdtype{\hid}{\symptoms}}
            \local
            \vspace{-3pt}
\end{eqnarray}
Global type $\G_1$ specifies that participant $\generalpract$ accesses personal data of type $\pdtype{\ti}{\symptoms}$ from a store decorated with annotation \annota, after receiving such store reference type from $\patient$. Similarly, global type $\G_2$ describes an anonymous personal data access. 
\qed
\end{example}
%
%
Local types, denoted by $T$, define communication interactions from the perspective of a single participant.
Local type $\tinp{\p}{\U}\,T$ denotes an input of a value of type \U from participant $\p$.
Local type $\tout{\p}{\U}\,T$ denotes an output of a value of type \U, towards participant $\p$.
Local type $\tinp{\annota}{\pdtype{\tid}{g}}\, T$ denotes the input of a personal data value of type $\pdtype{\tid}{\g}$ from a personal data store annotated with \annota. 
Dually, local type $\tout{\annota}{\pdtype{\tid}{\g}}\, T$ denotes the storing of a personal data value of type $\pdtype{\tid}{\g}$ to a personal data store annotated with \annota.
The \textit{select} local type $\tselect{\p}{\lab_i{:} T_i}_{i\in I}$ denotes the selection of a label $\lab_i$ from the offered set and its transmission to participant $\p$, before proceeding with type $T_i$.
Dually the \textit{branch} local type $\tbranch{\p}{\lab_i{:} T_i}_{i\in I}$, denotes the input of a label \lab[i] from participant $\p$ and proceeding according to type $T_i$. Recursion is modelled by the local type $\mu t.\local$, assuming equi-recursive type variable substitution, $\mu t. \local = \local \subst{\mu t.\local}{t}$.
Finally, local type $\tinact$ denotes the end of the local type.

\subsection{Local Projection of Global Types}
In this work we adopt the 
projection and merging operators from~\cite{ghilezan2019precise}, extended to handle the newly-introduced data-flow constructs:

\begin{definition}
\label{projectiondef}
The projection of a global type \G onto a participant $\q$, 
denoted as $\proj{\G}{\q}$, is a partial relation defined as follows,  where $\pts{\G}$ denotes the set of participants of $\G$:
\[
\setlength{\arraycolsep}{2pt}
\begin{array}{rcl}
	\proj{( \pass{\p}{\pd}{\U} \G)}{\q} &=&
	\left\{
		\begin{array}{ll}
			\tout{\pd}{\U}\proj{\G}{\q}  	&\,\;\quad\quad\quad\quad \text{if } \q = \p,
                \\[1mm]
				\tinp{\p}{\U}\proj{\G}{\q}       &\;\,\quad\quad\quad\quad \text{if } \q = \pd,
				\\[1mm]
			\proj{\G}{\q}                      &\;\,\quad\quad\quad\quad \text{otherwise}
		\end{array}
	\right.
	\\[9mm]
	\proj{( \pass{ \annota }{\p}{\pdtype{\tid}{\g}}\G)}{\q} &=&
	\left\{
		\begin{array}{ll}
          \tinp{\annota}{\pdtype{\tid}{\g}}
                \proj{\G}{\q}  		& \quad\quad\quad\text{if } \q = \p	
			\\[1mm]
			\proj{\G}{\q}          & \quad\quad\quad			\text{otherwise}
		\end{array}
	\right.
	\\[6mm]
	\proj{(\pass{\p}{\annota}{\pdtype{\tid}{\g}} \G)}{\q}\ &=&
	\left\{
		\begin{array}{ll}
        \tout{\annota}{\pdata{\tid}{\g}}
                \proj{\G}{\q}  		&\quad\quad\quad\text{if } \q = \p	
			\\[1mm]
			\proj{\G}{\q} &		\quad\quad\quad		\text{otherwise}
		\end{array}
	\right.
	\\[6mm]
	\proj{(\choice{\p}{\pd}{\lab[i] : \G_i}{i \in I}}){\q}\ &=&
	\left\{
		\begin{array}{ll}
			\tselecti{{\pd}}{\lab_i: \proj{\G[i]}{\q}}{i \in I}
  		    &\quad\text{if } \q = \p
			\\[1mm]
			\tbranchi{\p}{\lab_i: \proj{\G[i]}{\q}}{i \in I} 	&\quad\text{if } \q = \pd
			\\[1mm]
            \mrg_{i \in I}
			(\proj{\G_{i}}{\q}) &	\quad			\text{otherwise}
		\end{array}
	\right.
    \\[9mm]
        \proj{\ginact}{\q} &=& \ginact
\end{array}
\]
Above, $\mrg$ is a partial merging operator on local types, defined as follows:
\begin{eqnarray*}
	\local[1] \mrg \local[2] &=&
	\left\{
		\begin{array}{ll}
			\local[1]  	&\quad \text{if } \local[1] = \local[2],
            \\[1mm]
			\local[3]      &\quad \text{if }    \local[1] = \tbranchi{\p}{\lab_i: \local[i]}{i \in I},
                                                \local[2] = \tbranchi{\p}{\lab_j: \local[j]}{j \in J}
                                                \text{ and }
                                                \local[3] = \tbranchi{\p}{\lab_k: \local[k]}{k \in I \cup J},
		\end{array}
	\right.
\end{eqnarray*}
\qed
\end{definition}

Projection allows a participant to observe different actions in different branches of a choice even if that participant is not itself involved in the choice. This is essential for realistic workflows in which different branches involve different participants. For example, in the case study presented in Section~\ref{casestudy}, in a healthcare setting a laboratory may be involved only when a lab test is required, while in alternative branches, it does not participate at all.
\\
\begin{example}[Simplified Medical Workflow Projection]
\rm
As an example, projecting on \generalpract the global type defined in Example~\ref{ex3}, yields the following local type: 
\vspace{3pt}
\ceq{0}{\begin{array}{rcl}
        \proj{\G_1}{\generalpract} & = &    
        \tinp{\patient}{{\antype{\annota}{\tid}{\symptoms}}}
        \tinp{\annota}{\pdtype{\ti}{\symptoms}} \proj{\G_1}{\generalpract} 
    \end{array}
}
\vspace{3pt}
\noindent
 The \generalpract is receiving a store reference \annota from the \patient and proceeds with reading the \symptomsstore.
\qed
\end{example}

\subsection{Subtyping}

Subtyping, denoted by \(\subt\), is the largest relation between session
types coinductively defined by the rules in
Figure~\ref{fig:subtypingprocesses}, adapted from~\cite{ghilezan2019precise}. The subtyping relation preserves the usual safe substitutability principles for session types.
The relation is defined over value types, private data types, and local
session types. Base types, such as \(\integer\) and \(\bool\), are subtypes
only of themselves, as specified by rules \(\SUBINT\) and \(\SUBBOOL\).
Private-data types \(\pdtype{\tid}{\g}\) are covariant in the ground type
\(\g\) of the stored value.

Rule \(\SUBEND\) states that the inactive local type \(\tinact\) admits only
reflexive subtyping, i.e., it is a subtype only of itself.
Rule \(\SUBOUT\) defines output local types as covariant both in the
communicated type, \(\U \subt \Ud\), and in the continuation type,
\(\local \subt \locald\). Conversely, rule \(\SUBINP\) defines input local
types as contravariant in the communicated type, \(\Ud \subt \U\), and
covariant in the continuation type, \(\local \subt \locald\).

Rules \(\SUBPDOUT\) and \(\SUBPDINP\) extend the same principles to
private-data communication. Private-data output is covariant in the
private-data payload and in the continuation. Private-data input is
contravariant in the private-data payload and covariant in the continuation.

Rule \(\SUBSEL\) states that selection types are covariant and admit width
subtyping on the right: a selection offering fewer labels may be used where a
larger selection type is expected. Dually, rule \(\SUBBRA\) states that a
branch type accepting more labels may be used where a branch type accepting
fewer labels is expected. 
\begin{figure}[h]
\[
\begin{array}{c}
    \hline
    \\
    \SUBINT\; \integer \subt \integer
    \qquad
    \SUBBOOL\; \bool \subt \bool
    \qquad
    \dtree[\SUBDATA] {
        \g \subt \g'
    }{
        \pdtype{\tid}{\g} \subt \pdtype{\tid}{\g'}
    }
    \\[4mm]
    \SUBEND\; \tinact \subt \tinact
    \qquad
    \dtree[\SUBOUT] {
        \U \subt \U' 
        \quad
        \local \subt \local'
    }{
        \tout{\p}{\U}\local  \subt
        \tout{\p}{\U'}\local'
    }
    \qquad
    \dtree[\SUBINP] {
        \U' \subt \U 
        \quad \local \subt \local'
    }{
        \tinp{\p}{\U} \local \subt 
        \tinp{\p}{\U'} \local'
    }
    \\[4mm]
    \dtree[\SUBPDOUT] {
        \pdtype{\tid}{\g} \subt \pdtype{\tid'}{\g'}
        \quad
        \local \subt \local'
    }{
        \tout{\annota}{\pdtype{\tid}{\g}}  \local 
        \subt 
        \tout{\annota}{\pdtype{\tid'}{\g'}} \local'
    }
    \qquad
     \dtree[\SUBPDINP] {
        \pdtype{\tid'}{\g'} \subt \pdtype{\tid}{\g}
        \quad
        \local \subt \local'
    }{
        \tinp{\annota}{\pdtype{\tid}{\g}}  \local    
        \subt
        \tinp{\annota}{\pdtype{\tid'}{\g'}} \local'
    }
    \\[4mm]
    \dtree[\SUBSEL] {
        I \subseteq J \quad  \forall i \in I. T_i \subt T_i'
    }{
        \tselecti{\p}{\lab[i]: T_i}{i \in I}
        \subt
        \tselecti{\p}{\lab[j]: T'_j}{j \in J}
    }
    \qquad
    \dtree[\SUBBRA] {
         J \subseteq I \quad  \forall j \in J. T_j \subt T_j'
    }{
        \tbranchi{\p}{\lab[i]: T_i}{i \in I} 
        \subt
        \tbranchi{\p}{\lab[j]: T'_j}{j \in J}
    }
    \\
    \\
    \hline
\end{array}
\]
\caption{The subtyping relation}
\label{fig:subtypingprocesses}
\end{figure}

\section{
Type Checking Purpose Compliance}
\label{typesystem}
In this section, we present the type system of our calculus, which verifies whether a system implementation conforms to a declared purpose expressed as a global type 
 via a set of rules for checking whether a process or a network uses names and variables according to their types and follows the prescribed protocol associated with
the purpose.

The typing rules use {\em typing environment} \Ga,  
which associates references $u$ to personal data types of the form $\antype{\annota}{\tid}{\g}$, 
variables to ground types \g, placeholders $k$ to personal data types \pdtype{\tid}{\g}, and recursion variables to local types \local. 
\ceq{4} {
    \Ga \bnfis    \es \bnfbar 
                    \Ga, u: \antype{\annota}{\tid}{\g} \bnfbar 
                    \Ga, x: \g \bnfbar
                    \Ga, k: \pdtype{\tid}{\g} \bnfbar
                    \Ga, X: \local
}


Figure~\ref{fig:typesprocesses} defines the rules of the type system.
The first two lines specify the rules for asserting the types of store references, personal data variables, recursion variables, and value variables.
The next five lines define the typing rules for processes. Each rule has the form  $\Ga \proves \PP \as \local$, meaning that under environment \Ga process \PP has type \local.
The last line defines the typing rule for networks. 
\begin{figure}[h]
\[
\begin{array}{c}
    \hline
    \\
    \TData\; \Ga \proves \const: \g
    \qquad
    \TVdata\; \Ga \proves \id \as \ti
    \qquad
    \THdata\; \Ga \proves \hid \as \hit
    \qquad
    \TVar\; \Ga \cat x: \g \proves x: \g
    \\[2mm]
    \TPlace\; \Ga, k: \pdtype{\tid}{\g} \proves k: \pdtype{\tid}{\g}
    \qquad
    \TVarRec\; \Ga \cat X: \local \proves X \as \local
    \qquad
    \TPdata\; \Ga \cat u: \antype{\annota}{\tid}{\g} \proves u: \antype{\annota}{\tid}{\g}
    \\[6mm]
    \TInact \Ga \proves \inact \as \tinact
    \qquad
    \tree[\TOut] { 
        \Ga \proves t: \U
        \quad
        \Ga  \proves \PP \as \local 
    }{
        \Ga \proves \outp{\p}{t} \PP \as \tout{\p}{\U} \local
    }
    \qquad
    \tree[\TInp] {
        \Ga \cat x:\U \proves \PP \as \local
    }{
        \Ga \proves \inpp{\p}{x} \PP \as \tinp{\p}{\U} \local
    }
    \\[6mm]
    \tree[\TOutPD] {
        \Ga \proves u: \antype{\annota}{\ti}{\g} 
        \quad
        \Ga \proves \ii: \tid
        \quad
        \Ga \proves \const: \g
        \quad
        \Ga \proves \PP \as \local         
    }{
        \Ga \proves \outp{u}{\pdata{\ii}{\const}} \PP \as \tout{\annota}{\pdtype{{\tid}}{\g}} \local
    } 
    \\[6mm]
    \tree[\TInpPD] {
        \Ga \proves u: \antype{\annota}{\ti}{\g} 
        \quad
        \Ga, k: \pdtype{\tid}{\g} 
        \proves \PP \as \local 
    }{
        \Ga \proves \inpp{u}{k} \PP \as \tinp{\annota}{\pdtype{\tid}{\g}} \local
    }
    \qquad
    \tree[\TSel] {
        \Ga \proves \PP \as \local 
    }{
        \Ga \proves \selp{\p}{\lab} \PP \as \tselecti{\p}{\lab{:} T}{}
    }
        \\[6mm]
    \tree[\TBranch] {
        \forall i \in I, \Ga \proves \PP[i] \as \local[i]
    }{
        \Ga \proves \brapi{\p}{\lab[i]{:} P_i}{i\in I}\,\as\, \tbranchi{\p}{\lab[i]{:} T_i}{i \in I} 
    }   
        \qquad
     \tree[\TIf] {
       \Ga \proves \PP[1] \as \local 
        \qquad  \Ga \proves \PP[2] \as \local 
    }{
        \Ga \proves \ifelse{a}{b}{P_1}{P_2} 
        \as \local      
    }
     \\[6mm]
    \tree[\TRec] {
       \Ga, X : \local \proves \PP \as \local 
    }{
        \Ga \proves \mu X.P  \as \local
    }
      \qquad
     \tree[\TSub] {
       \Ga \proves \PP \as \local \qquad \local \leqslant \local'
    }{
        \Ga \proves \PP \as \local'
    }
    \\[6mm]
      \tree[\TNet] {
      \begin{array}{c}
         \forall i \in I, \Ga \proves \PP[i]:  \proj{\G}{\p[i]}
         \quad
         \pts{\G} \subseteq \set{\p[i]}_{i \in I}
         \\[1mm]
         \forall j \in J, \Ga \proves \rr[j]: \antype{\annota[j]}{\ti}{\g[j]} \land \Ga \proves \const[j]: \g[j]
         \quad
         \rr[j]: \antype{\annota[j]}{\ti}{\g[j]} \in \Ga \implies j \in J
         \end{array}
    }{
       \Ga \proves \prod_{i \in I}  \proc{\p[i]}{\PP[i]} \Par \prod_{j\in J}\store{\rr[j]}{\ti}{\const[j]} \,\as\, \G
    }
    \\
    \\
    \hline
\end{array}
\]
\caption{Typing rules for Constants, References, Variables, Private data, Processes, and Network}
\label{fig:typesprocesses}
\end{figure}

Starting with the rules for typing constants, references, and variables:
Rule \TData assigns a constant \const to a ground type \g, based on the type of the constant (e.g., integer, boolean).
Similarly, rules \TVdata and  \THdata assign the types \ti 
and \hit\, to non-anonymous identities and the anonymous identity, respectively.
Rule \TVar types variables with ground types, according to \Ga.
Subsequently, rules \TPlace,  \TVarRec, and \TPdata type
personal data variables $k$, recursion variables $X$, and
store terms $u$ according to \Ga.

Moving on to processes, typing rule \TInact types the inactive process with local type $\tinact$.
Typing rule \TOut types processes that send a value of type \U to participant \p with local type $\tout{\p}{\U} \local$, where \local is the type of the continuation process. Respectively, typing rule \TInp types processes that read a value of type \U from participant \p with local type $\tinp{\p}{\U}\local$.
%
%
Typing rule \TOutPD types a process that writes a personal data value of type $\pdtype{\tid}{\g}$ to a store via a reference of type \antype{\annota}{\ti}{\g}, with type $\tout{\annota}{\pdtype{\tid}{\g}}\local$.
Similarly,
typing rule \TInpPD types a process that inputs a value of type $\pdtype{\tid}{\g}$ via a store reference of type  \antype{\annota}{\ti}{\g}, with local type $\tinp{\annota}{\pdtype{\tid}{\g}} \local$. In both cases, \local
is the type of the continuation processes. Furthermore,
the use of $\ti$ and $\tid$, in the value
and store types of the rule, accounts to the fact that data manipulation
may be anonymized.
Typing rule \TSel types a process that sends a label \lab[] to participant \p with local type $\tselecti{\p}{\lab{:} T}{}$.
Typing rule \TBranch types a process that receives and matches a label from a choice of labels $\lab[i], i \in I$, from participant \p, with local type $\tbranchi{\p}{\lab[i]: \local[i]}{i \in I}$.
Finally, rule \TSub captures subsumption: if $P$ is well-typed with type $T$ and $T$ is a subtype of $T'$, then $P$ is also well-typed with type $T'$.

Moving to the typing rule for networks, typing rule \TNet types a network to a global type \G assuming that the participant processes composed in parallel are well-typed according to their local type projection of \G, all store processes
in the network are well-typed according to $\Gamma$, and all stores
in $\Gamma$ are present in the network. Moreover, the requirement that $\pts{\G} \subseteq \set{\p[i]}_{i \in I}$,
allows to type networks with participants of the form
$\proc{\p}{\inact}$, thus ensuring that typing remains invariant under structural congruence.


\section{Subject Reduction and Purpose Fidelity}
\label{theorems}
We establish 
\emph{subject reduction} and \emph{purpose fidelity} for our framework. We first present the Subject reduction Theorem (Theorem~\ref{subjredtheorem}), stating that the execution of a typed network, according to the operational semantics, preserves the overall typing of the network. 
%
%
Moreover, the Purpose Fidelity Theorem (Theorem~\ref{purpfidelitytheorem}) states that the communication sequence of a process follows the purpose declared.
In order to state the Subject Reduction Theorem we need to 
define how global types are modified during the evolution of networks.

\begin{definition}[Global type consumption and reduction] 
\label{consreddef}
The {\em consumption} of an action $\act$ in a global type \G 
(notation ${\G}\; \backslash \, {\act}$) is defined
co-inductively as follows.
\[
\begin{array}{c}
    \setlength{\arraycolsep}{2pt}
    \begin{array}{rclcrcccl}
    \act	&\bnfis&  \gconsdef{\p}{\q}{\U}
            \bnfbar \gconsdef{\p}{\q}{\lab[k]} &\hspace{3cm}& \pts{\gconsdef{\p}{\q}{\U}} &=& \pts{\gconsdef{\p}{\q}{\lab[k]}} &=& \set{\p, \q}
    \\      &\bnfbar& \gconsdef{\annota}{\p}{\pdtype{\tid}{\g}}
            \bnfbar \gconsdef{\p}{\annota}{\pdtype{\tid}{\g}} && \pts{\gconsdef{\annota}{\p}{\pdtype{\tid}{\g}}} &=& \pts{\gconsdef{\p}{\annota}{\pdtype{\tid}{\g}}} &=& \set{\p, \annota}
    \end{array}
    \\[6mm]
    \gcons{\pass{\p}{\q}{\U}\G}{\p}{\q}{\U} = \G
    \qquad
    \dtree {
        \gconsShort{\G}{\act} = \G'
        \quad
        \set{\p[1],\q[1]} \cap \pts{\act} = \es
    }{
        \gconsShort{\pass{\p[1]}{\q[1]}{\U[1]}\G
        }{
            \act
        } = \pass{\p[1]}{\q[1]}{\U[1]}\G'
    } 
    \\[4mm]
    \gcons{
        \choice{\p}{\q}{\lab[i] : \G_i}{i \in I}
    }{\p}{\q}{\lab[k]} = \G_k \;\; k \in I
    \qquad
    \dtree{
        \forall i \in I, \gconsShort{\G_i}{\act} = \G_i'
        \quad
        \set{\p[1],\q[1]} \cap \pts{\act} = \es
    }{
        \gconsShort{
            \choice{\p[1]}{\q[1]}{\lab[i] : \G_i}{i \in I}
        }{\act} = \choice{\p[1]}{\q[1]}{\lab[i] : \G_i'}{i \in I}
    }
    \\[4mm]
    \gcons{\pass{\annota}{\p}{\pdata{\tid}{\g}}\G}{\annota}{\p}{\pdtype{\tid}{\g}} = \G
    \qquad
    \dtree{
        \gconsShort{\G}{\act} = \G'
        \quad
        \set{\annota,\p} \cap \pts{\act} = \es
    }{
        \gconsShort{ \pass{\annota}{\p}{\pdtype{\tid}{\g}} \G }{\act} = \pass{\annota}{\p}{\pdtype{\tid}{\g}} \G'
    }
    \\[4mm]
    \gcons{\pass{\p}{ \annota }{\pdata{\tid}{\g}} \G}{\p}{\annota}{\pdata{\tid}{\g}} = \G
    \qquad
    \dtree{
        \gconsShort{\G}{\act} = \G'
        \quad
        \set{\p, \annota} \cap \pts{\act} = \es
    }{
        \gconsShort{ \pass{\p}{\annota}{\pdtype{\tid}{\g}} \G }{\act} = \pass{\p}{\annota}{\pdtype{\tid}{\g}} \G'
    }
\end{array}
\]
Assume a global type \G. The {\em reduction} of $\G$ is defined co-inductively as the smallest pre-order 
relation closed under the rule
$\G \trans{} \G'$, where $\gconsShort{\G}{\act} = \G'$ for
some $\G'$ and $\act$.
\end{definition}

\begin{theorem}[Subject Reduction] 
\label{subjredtheorem}
    Assume a 
    network \MM such that $\Ga \proves \MM \as \G$. 
    If \MM executes a computation step \MM \trans{} \MMd, then the resulting system \MMd is also
    typed with $\Ga \proves \MMd \as \G'$ such that $\G \trans{} \G'$.
\end{theorem}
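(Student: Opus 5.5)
We will proceed by induction on the derivation of $\MM \trans{} \MMd$, with a case analysis on the last reduction rule applied. Before the main argument we establish the usual auxiliary lemmas.

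\emph{(i) Substitution.} If $\Ga, x:\U \proves \PP \as \local$ and $\Ga \proves v : \U$, then $\Ga \proves \PP\subst{v}{x} \as \local$. Likewise, if $\Ga, k:\pdtype{\tid}{\g} \proves \PP \as \local$, $\Ga \proves \id:\ti$ and $\Ga \proves \const:\g$, then $\Ga \proves \PP\subst{\pdata{\id}{\const}}{k} \as \local$. For the anonymous placeholder $\pdata{\hid}{y}$ we use the anonymising clause of substitution, so that the result has identity type \hit.

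\emph{(ii) Subject congruence.} If $\Ga \proves \MM \as \G$ and $\MM \equiv \MM'$, then $\Ga \proves \MM' \as \G$. This relies on the side condition $\pts{\G} \subseteq \set{\p[i]}_{i\in I}$ of \TNet, which tolerates adding or removing roles $\proc{\p}{\inact}$ (for such $\p \notin \pts{\G}$ the projection is $\tinact$). It also relies on equi-recursion, $\mu t.\local = \local\subst{\mu t.\local}{t}$, together with \TRec, to handle the unfolding of $\mu X.\PP$.

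\emph{(iii) Inversion lemmas.} These hold up to \TSub: if $\Ga \proves \outp{\p}{t}\PP \as \local$ then $\local \geq \tout{\p}{\U'}\local'$ for some $\U', \local'$ typing $t$ and $\PP$, and analogously for the other prefixes.

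\emph{(iv) Projection--consumption lemma.} This is the core link between the two levels. Suppose $\proj{\G}{\p}$ is (a supertype of) a type with top prefix $\tout{\q}{\U}$ and $\proj{\G}{\q}$ with top prefix $\tinp{\p}{\U'}$. Then $\gconsShort{\G}{\gconsdef{\p}{\q}{\U}} = \G'$ is defined, $\proj{\G'}{\p}$ and $\proj{\G'}{\q}$ are the continuations, and $\proj{\G'}{\rr} = \proj{\G}{\rr}$ for every other $\rr$. The analogous statements hold for labels $\gconsdef{\p}{\q}{\lab[k]}$ (with merging giving $\proj{\G_k}{\rr} \subt \mrg_i \proj{\G_i}{\rr}$ for third parties), for store reads $\gconsdef{\annota}{\p}{\pdtype{\tid}{\g}}$ and for store writes $\gconsdef{\p}{\annota}{\pdtype{\tid}{\g}}$. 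We prove this by induction on the depth of the first occurrence of the relevant action in $\G$, unfolding recursion as needed. The disjointness side conditions in Definition~\ref{consreddef} match exactly the ``otherwise'' clauses of projection.

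With these lemmas the cases go as follows.

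\emph{Case \LComm.} Inversion of \TNet gives $\Ga \proves \outp{\q}{v}\PP[1] : \proj{\G}{\p}$ and $\Ga \proves \inpp{\p}{x}\PP[2] : \proj{\G}{\q}$. Inversion of \TOut and \TInp then gives $\Ga\proves v:\U$ and $\Ga, x:\U'' \proves \PP[2]$, with $\U \subt \U''$ following from transitivity of subtyping. Lemma (iv) yields $\G'$. Lemma (i), applied after weakening along the subtyping of $v$'s type, types $\PP[2]\subst{v}{x}$ against $\proj{\G'}{\q}$. The remaining roles are untouched, and stores are unchanged, so the store clauses of \TNet are preserved.

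\emph{Case \LBranch.} This is analogous: \TSel/\SUBSEL ensure the chosen $\lab[j]$ is in the global index set, and \SUBBRA picks the matching branch. Third parties are retyped using the merge property and \TSub.

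\emph{Case \LSOut.} Here the store content changes from $\constd$ to $\const$. We must check that $\Ga \proves \const : \g[j]$ for the store's $\g[j]$. This follows from \TOutPD, since $\rr$ has type $\antype{\annota}{\ti}{\g}$ in $\Ga$ and annotations uniquely determine the stored type. We also check $\G \trans{} \G'$ via the write action.

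\emph{Case \LSInp.} Use lemma (i) for placeholders, including the anonymous case, and the read action of consumption.

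\emph{Cases \LTrA and \LFlA.} By \TIf both branches have the same type, so take $\G' = \G$; this is allowed since reduction of global types is a pre-order, hence reflexive.

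\emph{Case \Alpha.} Follows from lemma (ii) and the induction hypothesis.

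The main obstacle is lemma (iv) in the presence of choices and merging. A participant $\rr$ not involved in a choice has local type $\mrg_i \proj{\G_i}{\rr}$, and after consuming $\gconsdef{\p}{\q}{\lab[k]}$ its required type becomes $\proj{\G_k}{\rr}$, which is in general a strict \emph{supertype} of the merged branch type in the wrong direction for \TSub. My first attempt would be to show that a process typed with a merged branch type $\tbranchi{\p}{\lab_i:\local[i]}{i\in I\cup J}$ is also typed by each component via \SUBBRA, which allows fewer labels on the right. Hence $\mrg_i \proj{\G_i}{\rr} \subt \proj{\G_k}{\rr}$, and \TSub closes the case. For this I will prove by induction on merging that $\local[1]\mrg\local[2] \subt \local[1]$. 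A secondary difficulty is that consumption is defined only when actions of disjoint participants precede the fired one, so I also need the well-known fact that a prefix enabled in two projections corresponds to the first causally-enabled action of $\G$. I would establish this by induction on $\G$ using the projection clauses.
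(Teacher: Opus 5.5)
Your skeleton is the same as the paper's: induction on the derivation of $\MM\trans{}\MMd$, supported by substitution, subject congruence, inversion and a projection/consumption lemma. For value and label exchanges your argument is the standard one. Your merging worry also resolves as you say: \SUBBRA gives $\mrg_{i}(\proj{\G[i]}{\pt{s}})\subt\proj{\G[k]}{\pt{s}}$ for a third party $\pt{s}$, which is exactly the direction \TSub needs.

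The genuine gap is lemma (iv) for store actions, on which your \LSOut and \LSInp cases rest. Your claim that the disjointness side conditions of Definition~\ref{consreddef} match the ``otherwise'' clauses of projection is false there. Both $\pts{\gconsdef{\p}{\annota}{\pdtype{\ti}{\g}}}$ and $\pts{\gconsdef{\annota}{\p}{\pdtype{\ti}{\g}}}$ contain $\annota$. So an access to $\annota$ by $\q$ can never be consumed past an earlier access to $\annota$ by some $\p\neq\q$, whereas $\proj{\G}{\q}$ simply skips that earlier access. For exchanges between participants, it is the simultaneous readiness of both endpoints that rules out a blocking prefix. A store has no local type, so its readiness carries no such information.

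A more careful proof cannot fix this, because with the definitions as written the \LSInp case fails outright (and symmetrically \LSOut). Take $\G=\pass{\p}{\annota}{\pdtype{\ti}{\g}}\pass{\annota}{\q}{\pdtype{\ti}{\g}}\ginact$, $\Ga=\rr\as\antype{\annota}{\ti}{\g}$ and $\MM=\proc{\p}{\outp{\rr}{\pdata{\id}{\const}}\inact}\Par\proc{\q}{\inpp{\rr}{\pdata{x}{y}}\inact}\Par\store{\rr}{\id}{\constd}$. By \TOutPD, \TInpPD and \TNet (with $\pts{\G}=\set{\p,\q}$) we get $\Ga\proves\MM\as\G$. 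Nothing prevents $\q$ from reading first, which yields $\MMd=\proc{\p}{\outp{\rr}{\pdata{\id}{\const}}\inact}\Par\proc{\q}{\inact}\Par\store{\rr}{\id}{\constd}$.

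The only reducts of $\G$ are $\G$, $\pass{\annota}{\q}{\pdtype{\ti}{\g}}\ginact$ and $\ginact$, since consuming $\gconsdef{\annota}{\q}{\pdtype{\ti}{\g}}$ first is blocked by $\set{\p,\annota}\cap\set{\q,\annota}\neq\es$. None of them types $\MMd$. Under $\G$, the role $\proc{\q}{\inact}$ would need type $\tinp{\annota}{\pdtype{\ti}{\g}}\ginact$, and discarding it violates $\pts{\G}\subseteq\set{\p[i]}_{i\in I}$. Under the other two, the pending write of $\p$ would need type $\ginact$.

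To close the store cases you must change the setting, in one of two ways. You could drop $\annota$ from the disjointness checks in consumption, so that accesses to the same store by distinct participants commute; the price is that their order is no longer enforced. Alternatively, you could add a well-formedness hypothesis on $\G$ requiring any two accesses to the same annotation by distinct participants to be causally ordered through intermediate communications. Under this hypothesis, if $\proj{\G}{\q}$ begins with an access to $\annota$, no other participant's access to $\annota$ can precede it in $\G$, and your lemma (iv) goes through.
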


\begin{proof}
    The proof is done by induction on the definition of the operational semantics relation. The details of the proof can be found in
    ~\cite{vanezigdprmst}. 
\end{proof}

\begin{theorem}[Purpose Fidelity]
\label{purpfidelitytheorem}
    Assume $\Ga \proves \MM \as \G$.
    \begin{itemize}
    \setlength{\itemsep}{2pt}
\setlength{\topsep}{2pt}
        \item   If $\G = \inact$ then $\MM \scong \inact$;
        \item   If there exists $\G'$ such that $\gconsShort{\G}{\gconsdef{\p}{\q}{\U}} = \G'$ then there exists \MMd, $\G''$ such that $\MM \trans{} \MMd$ with $\Ga \proves \MMd \as \G''$ and either $\G' = \G$ or $\G' = \G''$;
        \item   If there exists $\G'$ such that $\gconsShort{\G}{\gconsdef{\p}{\annota}{\pdtype{\tid}{\g}}} = \G'$ then there exists \MMd, $\G''$ such that $\MM \trans{} \MMd$ with $\Ga \proves \MMd \as \G''$ and either $\G' = \G$ or $\G' = \G''$;
        \item   If there exists $\G'$ such that $\gconsShort{\G}{\gconsdef{\annota}{\p}{\pdtype{\tid}{\g}}} = \G'$ then there exists \MMd, $\G''$ such that $\MM \trans{} \MMd$ with $\Ga \proves \MMd \as \G''$ and either $\G' = \G$ or $\G' = \G''$;
        \item   If there exists $\G''$ such that $\gconsShort{\G}{\gconsdef{\p}{\q}{\lab[k]}} = \G''$ then there exists $\G', \lab[m]$ such that $\gconsShort{\G}{\gconsdef{\p}{\q}{\lab[m]}} = \G'$, and there exists \MMd, $\G'''$ such that $\MM \trans{} \MMd$ with $\Ga \proves \MMd \as \G'''$ and either $\G' = \G$ or $\G' = \G'''$.
    \end{itemize}
\end{theorem}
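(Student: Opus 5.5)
The argument proceeds clause by clause. Each clause follows from an \emph{inversion} analysis of the typing derivation for $\Ga \proves \MM \as \G$, combined with Theorem~\ref{subjredtheorem} (Subject Reduction), which gives the typing of the reduct. We first state two auxiliary inversion lemmas, proved by induction on typing derivations; rule \TSub is absorbed using transitivity of $\subt$ and rule \TRec using equi-recursive unfolding.

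\begin{enumerate}
\item If $\Ga \proves P \as T$ and $T$ is an output, input, select, branch, or personal-data-action type, then, after unfolding recursion via $\equiv$ and resolving conditionals, $P$ exhibits a prefix of the matching shape, with a subtype of the expected payload.
\item If $\gconsShort{\G}{\act}$ is defined, then projection commutes with consumption. In that case $\proj{\G}{\p}$ has the corresponding action as one of its top-level prefixes, for every participant $\p \in \pts{\act}$. This is shown by co-induction on the consumption rules, with the merge $\mrg$ handled exactly as in~\cite{ghilezan2019precise}.
\end{enumerate}

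For the first clause, suppose $\G = \ginact$. Then every projection is $\tinact$. By inversion, using the fact that $\tinact$ has only itself as a subtype, each $\PP[i]$ is $\inact$ up to \TIf and recursion unfolding, so the roles vanish by $\MM \Par \proc{\p}{\inact} \equiv \MM$. The stores require an extra remark: the statement presumably means $\Ga$ carries no store references, or reads "$\equiv$" modulo stores. We would make this assumption explicit.

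For the value-exchange clause, assume $\gconsShort{\G}{\gconsdef{\p}{\q}{\U}} = \G'$. By lemma 2, $\proj{\G}{\p}$ begins with $\tout{\q}{\U}$ and $\proj{\G}{\q}$ begins with $\tinp{\p}{\U}$. By lemma 1, $\PP[p]$ reduces, possibly through internal \LTrA/\LFlA steps, to $\outp{\q}{v}P'$, and $\PP[q]$ has the shape $\inpp{\p}{x}Q$. If a conditional step occurs first, we take that single step as $\MM \trans{} \MMd$. Subject reduction then gives $\G''$; since \TIf types both branches with the same type, we get $\G'' = \G$, which covers the alternative "$\G' = \G$" as intended. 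Otherwise we apply \LComm. Subject reduction yields $\Ga \proves \MMd \as \G''$, and a direct inspection of that proof shows $\G'' = \gconsShort{\G}{\gconsdef{\p}{\q}{\U}} = \G'$.

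The store-write and store-read clauses go the same way. For a write, lemma 1 gives $\outp{u}{\pdata{\ii}{\const}}$ with $u$ typed $\antype{\annota}{\ti}{\g}$; reads are analogous with $\inpp{u}{k}$. Closedness of the network forces $u$ to be a reference $\rr$. The \TNet side condition, that every $\rr[j]$ in $\Ga$ is present in the network, supplies the store needed to fire \LSOut or \LSInp.

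For the branching clause, the enabled label is chosen by the process, so we take $\lab[m]$ to be the label actually selected by $\PP[p]$. By \TSel, \TSub, and \SUBSEL, $m$ lies in the index set of $\G$. By \SUBBRA, the receiver's branch offers at least the labels of $\proj{\G}{\q}$, so \LBranch fires and subject reduction gives $\G''' = \G_m$.

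The main obstacle is lemma 2 together with the identification $\G'' = \G'$. Consumption may occur deep inside a global type, beneath unrelated prefixes and choices involving third parties. We must show that the corresponding local prefix really sits at the top of the two projections, and that for uninvolved participants the merged projections are unaffected. We would prove this by co-induction on consumption, mirroring the projection/consumption lemma of~\cite{ghilezan2019precise}. The new data-flow prefixes need additional cases, in which annotations $\annota$ never participate in projection, which simplifies matters.
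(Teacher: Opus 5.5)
Your overall route matches what the paper's sketch indicates: invert \TNet, locate the consumed action at the head of the relevant projections, let a conditional step keep the type $\G$, and in the last clause use the label the sender actually selects. However, both auxiliary lemmas are stated too strongly and fail on cases the paper's definitions admit. The paper's two-line sketch leaves these points open too, but your proof has to close them.

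Your inversion lemma needs contractive recursion, which neither you nor the paper assumes. By \TRec and \TVarRec, $\mu X.X$ has every local type, and under $\scong$ the role $\proc{\p}{\mu X.X}$ only unfolds to itself. So $\proc{\p}{\mu X.X}\Par\proc{\q}{\inpp{\p}{x}\inact}$ is typed by $\pass{\p}{\q}{\integer}\ginact$ with $\Ga=\emptyset$, yet it cannot reduce; this refutes the second clause as stated. Likewise \TNet does not forbid variables in $\Ga$. With $x:\integer$ in $\Ga$, the role $\proc{\p}{\outp{\q}{x}\inact}$ is typable, but \LComm, which needs a value, never fires. So the ``closedness'' you invoke is an extra hypothesis, not a consequence of typing.

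Your projection/consumption lemma cannot be obtained ``by co-induction''. That the action is a top-level prefix of $\proj{\G}{\p}$ is a finite-depth property, so it must be proved by induction on the depth at which the action is consumed. Under the paper's literally co-inductive rules that depth need not exist. For $\G=\mu t.\pass{\p}{\q}{\integer}t$ one derives $\gconsShort{\G}{\gconsdef{\pt{r}}{\pt{s}}{\integer}}=\G$, although $\pt{r},\pt{s}$ occur neither in $\G$ nor necessarily in $\MM$. You must therefore restrict to consumptions with a well-founded derivation, or treat such vacuous consumptions separately.

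The first clause has a further hole beyond the stores you noticed. ``$\inact$ up to \TIf'' does not mean ``$\inact$ up to $\scong$'', because no congruence rule eliminates a conditional. For instance, $\proc{\p}{\ifelse{\const}{\const}{\inact}{\inact}}$ is typed by $\ginact$ (via \TIf, \TInact and \TNet, since $\pts{\ginact}=\emptyset$), but it is not congruent to the empty network. It only reaches it by an \LTrA step, and $\proc{\p}{\mu X.X}$ never does. This clause must be weakened as well, e.g.\ to reachability, through \LTrA/\LFlA steps, of a network congruent to the stores alone.

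Once contractiveness, closedness and finite-depth consumption are made explicit, the rest of your argument is sound, with one correction. In the branching clause the reduct is typed by $\gconsShort{\G}{\gconsdef{\p}{\q}{\lab[m]}}$, which equals $\G[m]$ only for a top-level choice. This consumption is defined for the selected $m$ because projection merges selection types only when they are identical.
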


\begin{proof}
We note that in the last four clauses $\G=\G'$ may arise due
to conditional statements, whereas, in the last clause, $\G'''$ may arise due to subtyping. The details of the proof can be found in
~\cite{vanezigdprmst}. 
\end{proof}

Intuitively, these two results establish our goal of
assuring that an implementation is purpose compliant. In particular, recall that a multiparty purpose protocol (a global type) in
our framework provides a structured specification of a privacy purpose.
Such a protocol describes how multiple participants exchange information and coordinate control over private data and private data stores (e.g., databases).
Assume a program \MM handling private data adheres to a purpose \G under an environment \Ga, i.e.,
\vspace{2pt}
\ceq{0}{
    \Ga\proves\MM\as\G.
}
\vspace{2pt}
The Subject Reduction Theorem ensures that all execution steps of \MM conform to the purpose \G: every interaction performed by \MM is prescribed by \G in the sense that after  evolution
\MM remains well-typed by a global type to which $\G$ can reduce. In other words, the program cannot exhibit behaviour outside the specified protocol.
Conversely, the Purpose Fidelity Theorem guarantees that \MM is never stuck when the purpose \G prescribes an interaction: whenever \G can consume an action, the program can perform a corresponding computation step.
Together, these results establish a tight correspondence between programs and purposes:
\begin{description}
    \item[Soundness] (Subject Reduction): programs do not exceed the specified purpose.
    \item[Completeness] (Purpose Fidelity): programs can realise all interactions required by the purpose.
\end{description}
Hence, the theory provides strong guarantees that purpose specifications are both safe (no unintended behaviour) and enforceable (all intended behaviour is executable) for programs handling private data.

\section{Case Study: Medical Diagnostics Workflow}
\label{casestudy}
To illustrate the expressiveness of our calculus and the applicability of our typing and compliance framework, we present a case study based on a medical diagnostics workflow, inspired by common healthcare scenarios in the formal privacy and purpose-limitation literature~\cite{TschantzDW12,jafari2014framework,jafari2009enforcing}. 
The case study demonstrates how intended processing purposes are specified, how system implementations are modelled, and how both compliance and violations are detected by our type system.

\subsection{Case Study Overview and Intended Purpose}
The workflow concerns a patient undergoing a diagnostic process managed by a digital healthcare system. 
The diagnostic process unfolds as follows: The patient (participant \patient) initially sends their personal information (via store $\rr_{i}$), writes and sends their symptoms (via store $\rr_{s}$), both to be received by a nurse practitioner (participant \generalpract). The nurse reads the
data from the two stores and performs an initial screening, determining whether further laboratory testing is required before forwarding the case to the general practitioner (participant \diagnosiseng). Depending on the decision, if a lab test is needed,
the nurse creates and sends a laboratory order (store $\rr_{o}$) to the lab (participant \laboratory), who reads the order, performs the test, writes down the results (store $\rr_{r}$), and sends them to the nurse. In both cases, the nurse sends the symptoms (store $\rr_{s}$) to the general practitioner, and if there was a lab test also the results (store $\rr_{r}$). The general practitioner then reads the received data, performs a diagnostic assessment, and records the diagnosis (store $\rr_{d}$), which is returned to the nurse. The nurse then forwards the diagnosis (store $\rr_{d}$) to the patient. Throughout this workflow, access to personal data is restricted to what is necessary for performing the diagnostic process, in accordance with the intended purpose.

\subsection{System Participants and Data Stores}
The workflow involves the following participants: patient (\patient), nurse practitioner (\generalpract), laboratory (\laboratory), and general practitioner (\diagnosiseng). 
As discussed in Section~\ref{calculus}, personal data are stored in dedicated data stores, which are accessed via references. 
The following store references and types appear: 
\begin{itemize}
\setlength{\itemsep}{-1pt}
    \item $\rr_{i}$ (demographic and administrative information e.g., age, patient ID), of type $\antype{\infostore}{\id}{\basicinfo}$, i.e.,
    decorated with the static annotation $\infostore$ and holding data of type $(\id \otimes \basicinfo)$;
    \item $\rr_{s}$ (symptoms reported by the patient), with type 
    $\antype{$\symptomsstore$}{\id}{\symptoms}$
    %
    \item $\rr_{o}$ (laboratory test order), with type $\antype{$\lorderstore$}{\id}{\laborder}$
    %
    \item $\rr_{r}$ (laboratory results), with type 
    $\antype{$\resultsstore$}{\id}{\labresults}$
    %
    \item $\rr_{d}$ (final diagnostic report), with type
    $\antype{$\diagnstore$}{\id}{\diagnosticreport}$
    %
\end{itemize}


\subsection{Formal Specification of Processing Purposes}
The diagnostic processing purpose of the workflow is formalised as the following global type ($\G_{diagnostic}$), which specifies the allowed interaction patterns and data accesses between participants in compliance with the intended diagnostics purpose. 
For readability reasons, we factor the final diagnostic continuations into `withLab' and `noLab'.
The specification distinguishes between diagnostic paths that require laboratory testing and those that do not, while ensuring that personal data are accessed only by specified entities and only when required by the diagnostic purpose.
For brevity, when exchanging store references, we write only the annotation, omitting the full type info. For example, we write $\pass{\patient}{\generalpract}{\infostore} \G$ instead of 
$\pass{\patient}{\generalpract}{\infostore\privatet{\pdata{\id}{\basicinfo}}^{}} \G$.

\[
\begin{array}{rcl}
\G_{diagnostic}&=&
\pass{\patient}{\symptomsstore}{\id \otimes \symptoms}
    \pass{\patient}{\generalpract}{\infostore} \\ &&
    \pass{\patient}{\generalpract}{\symptomsstore} 
    \pass{\infostore}{\generalpract}{\id \otimes \basicinfo} \\ &&
    \pass{\symptomsstore}{\generalpract}{\id \otimes \symptoms} \\ &&
         \choicetable{\generalpract}{\laboratory}{ 
         \labno: 
\text{noLab}, 
            \\
            \labyes: \text{withLab} 
            }
\end{array}
\]
\[
\begin{array}{rcl}
\text{noLab} &=&
    \choicetable{\generalpract}{\diagnosiseng}{
            \labno: & \pass{\generalpract}{\diagnosiseng}{\symptomsstore} \\ &
            \pass{\symptomsstore}{\diagnosiseng}{\id \otimes \symptoms} \\ &
    \pass{\diagnosiseng}{\diagnstore}{\id \otimes \diagnosticreport}\\ &
    \pass{\diagnosiseng}{\generalpract}{\diagnstore} \\ &
    \pass{\generalpract}{\patient}{\diagnstore}\\& \ginact
    }
    \end{array}
\]
\[
\begin{array}{rcl}
  \\[5mm]
   \text{withLab} &=&
   \pass{\generalpract}{\lorderstore}{\id \otimes \laborder}  
   \pass{\generalpract}{\laboratory}{\lorderstore} 
    \\&&  \pass{\lorderstore}{\laboratory}{\id \otimes \laborder} 
    \pass{\laboratory}{\resultsstore}{\id \otimes \labresults}\\ &&
     \pass{\laboratory}{\generalpract}{\resultsstore} \\&&
      \choicetable{\generalpract}{\diagnosiseng}{
            \labyes: &  \pass{\generalpract}{\diagnosiseng}{\symptomsstore}\\ &
            \pass{\generalpract}{\diagnosiseng}{\resultsstore} \\&
    \pass{\symptomsstore}{\diagnosiseng}{\id \otimes \symptoms} \\&
    \pass{\resultsstore}{\diagnosiseng}{\id \otimes \labresults} \\&
    \pass{\diagnosiseng}{\diagnstore}{\id \otimes \diagnosticreport} \\&
    \pass{\diagnosiseng}{\generalpract}{\diagnstore} \\&
    \pass{\generalpract}{\patient}{\diagnstore}\\ &\ginact}
    \\[10mm]
\end{array}
\]

We now project the global type, into local types for each participant. We demonstrate below the local type for participant \diagnosiseng. The projection reflects that \diagnosiseng receives a label from \generalpract, and subsequently receives and accesses the corresponding data depending on the diagnostic path, and performs a diagnosis. 

\[
    \begin{array}{rcl}
        \proj{\G}{\diagnosiseng} & = & 
        \tchoicetable{\generalpract}{ 
        \labyes: &
         \tinp{\generalpract}{\symptomsstore}
         \tinp{\generalpract}{\resultsstore}\\ &
         \inppd{\symptomsstore}{\id \otimes \symptoms}
         \inppd{\resultsstore}{\id \otimes \labresults}\\&
          \outpd{\diagnstore}{\id \otimes \diagnosticreport}
          \tout{\generalpract}{\diagnstore} \tinact
         ,\\
        \labno: &
        \tinp{\generalpract}{\symptomsstore}\\ &
        \inppd{\symptomsstore}{\id \otimes \symptoms}
        \outpd{\diagnstore}{\id \otimes \diagnosticreport}\\ &
        \tout{\generalpract}{\diagnstore}
        \tinact }
    \end{array}
\]
The complete set of the projected local types can be found in
~\cite{vanezigdprmst}. 

\subsection{Modelling a Purpose-Compliant Implementation}

We 
use the calculus syntax to
model a system implementation that follows the interaction prescribed by the diagnostic purpose global type, and is thus intended to be type-compliant with $\G_{diagnostic}$. We present the process for the \diagnosiseng, as well as the network. All other participants are defined analogously and conform to their projected local types and can be found in
~\cite{vanezigdprmst}.
\[
	\begin{array}{rcl}  
              D &=&
                \bratable{\generalpract}{
                    \labyes: &  \inpp{\generalpract}{x_{s}} 
                \inpp{\generalpract}{x_{r}}
                \inpp{x_{r}}{x_{id1} \otimes z_{r}} 
                \inpp{x_{s}}{x_{id1} \otimes z_{s}} \\ &
                \outp{r_{d}}{x_{id1} \otimes diagn_1}
                \outp{\generalpract}{r_{d}} 
                \inact, \\
                \labno: &       \inpp{\generalpract}{x_{s}}
                \inpp{x_{s}}{\id \otimes z_{s}}
                \outp{\rr_{d}}{\id \otimes diagn_1} \outp{\generalpract}{r_{d}} \inact
                } 
            \\ \\
             \MM & = & \proc{\generalpract}{A} \Par \proc{\patient}{B} \Par \proc{\laboratory}{C} \Par 
            \proc{\diagnosiseng}{D}  \\&&
            \Par \store{\rr_{i}}{\id_1}{\basicinfo}
            \Par \store{\rr_{s}}{\id_1}{w_1} 
            \Par \store{\rr_{o}}{\id_1}{w_2} \\&&
            \Par \store{\rr_{r}}{\id_1}{w_3} 
            \Par \store{\rr_{d}}{\id_1}{w_4}
    \end{array}
\]

\paragraph{Compliance Verification via Type Checking.}
Using our type system, we may check whether the system's behaviour complies with the specified purpose. Demo application of the typing rules can be found in
~\cite{vanezigdprmst}. 
By applying the typing rules, we observe that the process of each participant corresponds to the associated projected local type from $\G_{diagnostic}$. 
Consequently, network $M$ is typable under the declared global type $\G_{diagnostic}$. Therefore, $M$ is purpose-compliant.


           

\subsection{Modelling and Detection of Non-Compliant Behaviour}
We introduce a deviation of the model in which the general practitioner (\diagnosiseng) accesses the patient’s medical data without completing the diagnostic process, resulting in processing that does not fulfil the declared purpose.
\[
	\begin{array}{rcl}
             D_{nc} &=&
                \bratable{\generalpract}{
                    \labyes: &  \inpp{\generalpract}{x_{s}} 
                \inpp{\generalpract}{x_{r}}
                \inpp{x_{r}}{\id \otimes z_{r}}
                \inpp{x_{s}}{\id \otimes z_{s}} 
                \inact, \\
    \labno: &   \inpp{\generalpract}{x_{s}}
                \inpp{x_{s}}{\id \otimes z_{s}}
                \inact
                } 
                \\ \\
             M_{nc} & = & \proc{\generalpract}{A} \Par \proc{\patient}{B} \Par \proc{\laboratory}{C} \Par 
            \proc{\diagnosiseng}{D_{nc}}  \\&&
            \Par \store{\rr_{i}}{\id_1}{\basicinfo}
            \Par \store{\rr_{s}}{\id_1}{w_1} 
            \Par \store{\rr_{o}}{\id_1}{w_2} \\&&
            \Par \store{\rr_{r}}{\id_1}{w_3} 
            \Par \store{\rr_{d}}{\id_1}{w_4}
    \end{array}
\]

\paragraph{Violation Detection via Type Checking.}
In this case, application of the typing  rules indicates that all participants except \diagnosiseng correspond to their roles in the projected local types. 
The type system detects this unauthorised interaction, and the network $M_{nc}$ is rejected as non-compliant.

\section{Conclusions}
\label{concl}
Aiming to address the GDPR \emph{purpose limitation} principle which mandates that systems should collect and process personal data according to explicitly specified purposes, and building upon our previous work of~\cite{EVpurpose,vanezi2025privacy} and relevant literature, we have presented a formal, purpose-aware framework grounded in multiparty session types. The presented approach allows the specification of purposes in systems as structured interaction protocols, the modelling of system implementations that process personal data using a process calculus, and the compliance verification between the two, recognising and reporting non-compliant system implementations through a type system. 
To achieve this, the framework integrates the notion of \emph{personal data}~\cite{KP-LMCS17} into multiparty session types~\cite{takeuchi1994interaction,honda1998language,honda2008multiparty}.
We establish subject reduction and purpose fidelity results.
%
%
Our approach bears promise for practical applicability in software engineering through the direct correspondence between the proposed global types for capturing purposes and purpose-enhanced UML sequence diagrams~\cite{EVpurpose,vanezi2025privacy}.

As future work, we plan to extend the type system with a linear dimension to ensure properties such as uniqueness of store references. We also aim to support additional analyses, including model checking, to validate further properties of purposes and other GDPR requirements.
%
Moreover, we plan to extend the calculus to support more complex system definitions, such as multiple parallel sessions and dynamic store creation. 
Ultimately, our objective is to evolve this work 
into a software-engineering-oriented approach that unifies purpose modelling and compliance verification within a lifecycle-driven methodology, enabling end-to-end purpose compliance validation. As such, we envision the development of a tool that will automate the  formal purpose-compliance validation process, and support its integration into software engineering practice, alongside sequence diagrams, thereby advancing a practical privacy-by-design approach.

\newpage
\nocite{*}
\bibliographystyle{eptcs}
\bibliography{generic}
\end{document}